\theoremstyle{plain}
\newtheorem{thm}{Theorem$\!$}
\newenvironment{theorem}
{\begin{thm}\hspace*{-1ex}{\bf.}}{\end{thm}}
\newtheorem{lem}[thm]{Lemma$\!$}
\newenvironment{lemma}{\begin{lem}\hspace*{-1ex}{\bf.}}{\end{lem}}
\newtheorem{prop}[thm]{Proposition$\!$}
\newenvironment{proposition}{\begin{prop}\hspace*{-1ex}{\bf.}}{\end{prop}}
\newtheorem*{prop:nonidea}{Proposition \ref{prop:nonideal_equiv}'}
\newenvironment{prop:nonideal}{\begin{prop:nonidea}\hspace*{-1ex}{\bf.}}{\end{prop:nonidea}}
\newtheorem{cor}[thm]{Corollary$\!$}
\newtheorem{defn}{Definition$\!$}
\newenvironment{definition}{\begin{defn}\hspace*{-1ex}{\bf.}}{\end{defn}}
\newtheorem{xmpl}{Example$\!$}
\newtheorem{alg}{Algorithm$\!$}
\newtheorem{prob}[thm]{Problem$\!$}
\newtheorem{cnst}{Construction$\!$}
\newenvironment{construction}{\begin{cnst}\hspace*{-1ex}{\bf.}}{\end{cnst}}
\newcounter{enumrom}
\renewcommand{\theenumrom}{(\roman{enumrom})}
\renewcommand{\@endtheorem}{\endtrivlist}
\renewcommand{\thefigure}{{\@arabic\c@figure}}
\renewcommand{\fnum@figure}{{\bf Figure\,\thefigure}}
\newcommand{\be}[1]{\begin{equation}\label{#1}}
\newcommand{\ee}{\end{equation}}
\renewcommand{\leq}{\leqslant}
\renewcommand{\geq}{\geqslant}
\newcommand{\Cref}[1]{Co\-ro\-lla\-ry\,\ref{#1}}
\newcommand{\code}{\mathcal{C}}
\newcommand{\lcode}{\mathcal{C'}}
\newcommand{\icode}{\mathcal{K}}
\DeclareMathAlphabet{\mathbfsl}{OT1}{cmr}{bx}{it}
\newcommand{\vect}[1]{\boldsymbol{#1}}
\newcommand{\dsub}{\delta}
\newcommand{\dst}{d}
\newcommand{\Dst}{D}
\newcommand{\thalf}{s}
\outer\def\proclaim #1. #2\par{\medbreak
 \noindent{\bf#1.\enspace}{\sl#2\par}%
 \ifdim\lastskip<\medskipamount \removelastskip\penalty55\medskip\fi}
\mathchardef\inn="3232
\renewcommand{\in}{{\,\inn\,}}
\newcommand{\RS}{\overline{RS}}
\begin{document}

\sloppy

%% Paper Title
%% You can use linebreaks \\ within to get better formatting as
%% desired.
\title{Multi-Block Interleaved Codes for Local and Global Read Access}

%% Author names and affiliations:
%%
%% Avoiding spaces at the end of the author lines is not a problem with
%% conference papers because we don't use \thanks or \IEEEmembership.
%%
%% For several authors with only one affiliation:
%%
% \author{
%   \IEEEauthorblockN{Hui-Ting Chang and Stefan M.~Moser}
%   \IEEEauthorblockA{Department of Electrical and Computer Engineering\\
%     National Chiao Tung University (NCTU)\\
%     Hsinchu, Taiwan\\
%     Email: \{email-of-hui-ting,email-of-stefan\}@ieee.org}
% }
%%
%% For up to three affiliations:
%%
%\author{
%\authorblockN{{\bf John Doe}}
%\authorblockA{
%Technion -- Israel Institute of Technology\\
%Department of Electrical Engineering\\
%{\it @ee.technion.ac.il}}
%\and
%\authorblockN{{\bf Jane Doe}}
%\authorblockA{
%Technion -- Israel Institute of Technology\\
%Department of Electrical Engineering\\
%{\it @ee.technion.ac.il}}}

\author{\authorblockN{\textbf{Yuval Cassuto}\authorrefmark{1}, \textbf{Evyatar Hemo}\authorrefmark{1}, \textbf{Sven Puchinger}\authorrefmark{2} and \textbf{Martin Bossert}\authorrefmark{2}}
\authorblockA{\authorrefmark{1}Viterbi Electrical Engineering Department, Technion -- Israel Institute of Technology, Israel \\}
\authorblockA{\authorrefmark{2}Institute of Communications Engineering, Ulm University, Germany\\}
{\it ycassuto@ee.technion.ac.il, evyatar.hemo@gmail.com, \{sven.puchinger,martin.bossert\}@uni-ulm.de}
}

% make the title area
\maketitle

%%
%% For over three affiliations, or if they all won't fit within the width
%% of the page, use this alternative format:
%%
% \author{
%   \IEEEauthorblockN{
%     Michael Shell\IEEEauthorrefmark{1},
%     Homer Simpson\IEEEauthorrefmark{2},
%     James Kirk\IEEEauthorrefmark{3},
%     Montgomery Scott\IEEEauthorrefmark{3} and
%     Eldon Tyrell\IEEEauthorrefmark{4}}
%   \IEEEauthorblockA{
%     \IEEEauthorrefmark{1}School of Electrical and Computer Engineering\\
%     Georgia Institute of Technology, Atlanta, Georgia 30332--0250\\
%     Email: see http://www.michaelshell.org/contact.html}
%   \IEEEauthorblockA{
%     \IEEEauthorrefmark{2}Twentieth Century Fox, Springfield, USA\\
%     Email: homer@thesimpsons.com}
%   \IEEEauthorblockA{
%     \IEEEauthorrefmark{3}Starfleet Academy, San Francisco, California 96678-2391\\
%     Telephone: (800) 555--1212, Fax: (888) 555--1212}
%   \IEEEauthorblockA{
%     \IEEEauthorrefmark{4}Tyrell Inc., 123 Replicant Street, Los Angeles, California 90210--4321}
% }

%% Use for special paper notices
%\IEEEspecialpapernotice{(Invited Paper)}

%% To balance the two columns, you should reduce the text-height of
%% the last page using the following command:
%%%%%%%%%%%%%%%%%%%%%%%%%%%%%%%%%%%%%%%%%%%%%%%%%%%%%%%%%%%%%%%%%%%%%
%\addtolength{\textheight}{-9.35cm}
%%%%%%%%%%%%%%%%%%%%%%%%%%%%%%%%%%%%%%%%%%%%%%%%%%%%%%%%%%%%%%%%%%%%%
%% with an appropriate value. This command must be place on the second
%% last page, i.e., for a one-page abstract here, for a two-page
%% abstract right after the \maketitle command.

%% Create the title:
\maketitle
\bibliographystyle{IEEEtranS}
%% Abstract:
%% For the final version of the accepted paper, please make sure you
%% remove the comment "THIS PAPER IS ELIGIBLE FOR THE STUDENT PAPER
%% AWARD."
%%
\begin{abstract}
We define multi-block interleaved codes as codes that allow reading information from either a small sub-block or from a larger full block. The former offers faster access, while the latter provides better reliability. We specify the correction capability of the sub-block code through its gap $t$ from optimal minimum distance, and look to have full-block minimum distance that grows with the parameter $t$. We construct two families of such codes when the number of sub-blocks is $3$. The codes match the distance properties of known integrated-interleaving codes, but with the added feature of mapping the same number of information symbols to each sub-block. As such, they are the first codes that provide read access in multiple size granularities and correction capabilities.
\end{abstract}
\section{Introduction}
The two central features sought in data-storage applications are {\em extreme data reliability} and {\em fast data access}. In data reliability we wish to avoid data loss in all conceivable circumstances, and in fast data access we want to read data with high throughput and low latency. Access considerations often dictate implementing an error-correcting code over a fixed (and not too large) data unit, which degrades the coding performance and the resulting reliability. An especially interesting instance of this happens in Flash storage, where a group of multi-level memory cells is divided to smaller data units called {\em pages}.

It is well known that coding over large block lengths gives the best reliability for a given coding rate. But adding access performance to the considerations, block lengths are tightly constrained by the granularity required for the read/write interface of the storage device. The standard approach of the storage industry is to fix the coding block length to be the basic access unit of the device (e.g. a page of a few KB), and come up with the best code for this block length. A conflict between coding efficiency and access performance thus emerges whenever the storage is required to deliver data at fine access granularities that imply short code blocks. Solving this conflict needs a multi-level read/write scheme with the following features:
\begin{enumerate}
\item Jointly writing $m$ data (sub-)units (e.g. pages), each with $k$ symbols, to a single write block.
\item Allowing random read access to a sub-unit, i.e., returning its $k$ symbols by physically reading a sub-block $1/m$ smaller than the write block.
\item In cases where a sub-unit read fails due to excess errors in the sub-block, reading the full write block succeeds in retrieving the data.
\end{enumerate}
Feature 2 takes care of the fast-access requirement, and feature 3 improves reliability with only a minor adjustment of read performance in rare unfortunate instances. Note that focusing on random sub-unit {\em read} performance is in line with real applications that are much more sensitive to delayed reads than writes.

The formal model that fits the above three features now follows. Define a {\em write block} as $N=mn$ symbols, where $m$ is the number of {\em sub-blocks} and $n$ is the number of symbols in each sub-block. A {\em write unit} of $K=mk$ information symbols divides into $m$ {\em sub-units} of size $k$ each. In the write path, $K$ information symbols are encoded into $N$ code symbols and written to the memory. In the read path, we need to return a sub-unit of $k$ symbols by reading a sub-block of $n$ symbols. This operation is divided into {\em decoding} and {\em reverse mapping}, where the former corrects the errors/erasures in the $n$ sub-block symbols, and the latter retrieves the $k$ information symbols from the corrected sub-block. If sub-block decoding failed to correct the errors/erasures, we allow reading the entire write block and decode it as $[N,K]$ code.
%In most decoding instances the redundancy in the $n$ sub-block code symbols will suffice to recover the $k$ information symbols; while in instances with too many errors in the sub-block we will decode a larger block with more redundancy, and likely a lower error rate.
We define an $\langle N,K,m\rangle$ {\em multi-block interleaved code} as a code supporting these operations with the parameters $N,K,m$. Our objective in this paper is to construct $\langle N,K,m\rangle$ codes with good distance properties for both individual sub-block decoding and full write-block decoding.

The first place to look for multi-block interleaved codes is within the literature on {\em integrated-interleaving (I-I) codes}~\cite{IImag,Subline} (see also~\cite{MarioII}), which provide one minimum distance for sub-block codewords and a larger one for full codewords. While I-I codes give good (sometimes provably optimal) tradeoffs between sub-block and full-codeword distances, they cannot be readily used as multi-block interleaved codes. The issue with I-I codes is that it is not known how to reverse-map a sub-block codeword to $k$ information symbols. The encoder presented for I-I codes~\cite{MarioII} maps different numbers of information symbols to different sub-blocks, and it is not clear how to re-organize the resulting $mk\times mn$ generator matrix to get $m$ full-rank $k\times n$ sub-matrices. Because of that, I-I codes fail to support the critical feature 2 in the read/write model. Another related coding framework is {\em locally recoverable codes (LRC)}~\cite{GopalanLRC,KumarVLocalErasures,TamoBargLRC}. LRC are focused on efficient local {\em repair} of individual code symbols, while here we are interested in local decoding of larger sub-blocks. The primary reason why known LRCs are not directly applicable to the proposed model is that they partition the code coordinates to disjoint sets that are each an MDS local code. Making each of the $m$ sub-blocks an $[n,k]$ MDS local code degenerates the problem, because all parity symbols are local in this case.

Our main contributions in the paper are two code constructions of multi-block interleaved codes providing flexible tradeoffs between sub-block and full-codeword correction capabilities. The flexibility is achieved by introducing an integer parameter $t$ specifying the gap of the sub-block code from minimum-distance optimality. As $t$ grows, the constructed codes enjoy better minimum distances for the full write-block codewords. This new view of the local vs. global correction tradeoff through the $t$ parameter allows their explicit joint-optimization, while in I-I codes this relation is much less transparent (in I-I codes there is no notion of sub-block dimension $k$, so $t$ is not even defined). One family of codes, presented in Section~\ref{sec:m3}, allows improvement in the full-block minimum distance (by growing $t$) reaching $d=1.5(n-k+1)$. An improved construction, presented in Section~\ref{sec:im3}, can grow the full-block minimum distance further until $d=1.6(n-k+1)$. The proposed constructions are given for $m=3$ sub-blocks, but can be extended to additional $m$ with similar techniques. The key idea toward having sub-block reverse mapping is the specification of the codes through their generator matrices. This enables to carefully populate the generator matrix with constituent Reed-Solomon (RS) generators such that all distance properties are fulfilled simultaneously. Coincidentally, both constructions achieve the same distance properties as known I-I codes: the former meets 2-level I-I and the latter meets 3-level I-I. However, the constructed codes are {\em not} reformulations of these known I-I codes: the resulting codes are different, and there is no apparent transformation from the I-I codes to our codes that adds the desired sub-block access features. A previous use of constituent RS generators was in~\cite{dettmar1993new} to construct {\em partial unit memory (PUM)} convolutional codes~\cite{lee1976short,lauer1979some} (see also~\cite{MartinBook:99}).

\section{Definitions and Notations}
Throughout the paper we denote the set $\{a,a+1,\ldots,b\}$ by $[a:b]$. Similarly, $[a:b]_{n}$ denotes the same set, but with element indices taken modulo $n$. For example, if $n=7$, then $[5:9]_{n}=[5:2]_{n}=\{5,6,0,1,2\}$. The operation $A\setminus B$ represents set difference between $A$ and $B$. We use the term {\em distance} to refer to the Hamming distance between vectors over the field GF$(q)$. For notational convenience, the statements in the paper on code minimum distances are written as some $d$ equal some number, even though in some cases we only prove (the important part) that $d$ is {\em at least} that number. We first define our principal object of study, which we call {\em multi-block interleaved code}.
\begin{definition}[multi-block interleaved code]
An $[N,K]$ linear code $\code$ is a $\langle N,K,m\rangle$ \textbf{multi-block interleaved code} if its code coordinates are partitioned to $m$ disjoint sub-blocks of size $n=N/m$ each, and from each sub-block we can recover a disjoint size-$k=K/m$ sub-unit of information symbols.
\end{definition}

For decoding a multi-block interleaved code we assume $m$ projections of the coordinates $[1:N]$ to disjoint subsets $[1:n]$, $[n+1:2n]$,... up to $[(m-1)n+1:mn]$. We assume that the projection of $\code$ onto any of these subsets gives the same $[n,\ell]$ code, which we denote $\lcode$. Next we define the distances of multi-block interleaved codes.
\begin{definition}[sub-block minimum distance]
An $\langle N,K,m\rangle$ multi-block code $\code$ has \textbf{sub-block minimum distance} $\dsub$ if the sub-block projected code $\lcode$ has minimum distance $\dsub$.
\end{definition}
The (full-block) minimum distance of an $\langle N,K,m\rangle$ multi-block code will be defined to be the usual minimum distance as $[N,K]$ block code.
\subsection{Bound on the minimum distance of multi-block interleaved codes}
The following is a restatement of a known theorem from~\cite{Subline}, but posed in different notation (using the $t$ parameter) that is helpful for the constructions in the sequel.
\begin{theorem}\label{th:dist_bound}
Let $\code$ be an $\langle N=mn,K=mk,m\rangle$ multi-block interleaved code with sub-block minimum distance $\dsub=n-k-t+1$. If $t\leq(k-1)/(m-1)$ then the minimum distance of $\code$ is bounded by
\begin{equation} \dst \leq n-k+(m-1)t+1.\label{eq:d_bound}\end{equation}
\end{theorem}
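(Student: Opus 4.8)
The statement upper-bounds the full-block minimum distance $\dst$ of a multi-block interleaved code in terms of its sub-block distance $\dsub = n-k-t+1$. Since $\code$ is a linear $[N,K]$ code, the plan is to exhibit a single nonzero codeword of weight at most $n-k+(m-1)t+1$, which then bounds $\dst$ from above. The natural way to produce a low-weight codeword is to exploit the structure imposed by the sub-block projections: each of the $m$ sub-blocks carries $k$ information symbols, and the projected code $\lcode$ is an $[n,\ell]$ code with minimum distance $\dsub$. A standard fact (Singleton, applied to $\lcode$) gives $\ell \geq n - \dsub + 1 = k + t$, so each sub-block projection has dimension at least $k+t$, i.e. at least $t$ more than $k$. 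The extra $t$ "degrees of freedom'' per sub-block are exactly what we will spend to zero out coordinates and build a light global codeword.

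Concretely, I would first fix attention on one sub-block, say the first, whose projection $\lcode$ has dimension $\ell \ge k+t$ and length $n$. Because $\dim\lcode \ge k+t$, there is a subcode of $\lcode$ of dimension $\ge t$ whose words also encode the \emph{zero} sub-unit of information in that sub-block (the reverse map from sub-block to $k$ information symbols is linear and surjective, so its kernel inside $\lcode$ has dimension $\ge \ell - k \ge t$). Within a $t$-dimensional space of length-$n$ vectors one can find a nonzero vector vanishing on any prescribed set of $t-1$ coordinates, hence of weight at most $n-(t-1) = n-t+1$. The idea is to do this simultaneously and coherently across the sub-blocks: I want a global codeword $\cw\in\code$ that (i) is nonzero, (ii) carries zero information in as many sub-units as possible, and (iii) is light on the remaining sub-blocks. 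Suppose $\cw$ carries nonzero information only on a single sub-unit; then $\cw$ restricted to that one sub-block is a nonzero word of $\lcode$, of weight $\ge \dsub = n-k-t+1$, and on each of the other $m-1$ sub-blocks $\cw$ restricted there lies in the weight-constraint we just described but is not forced to be zero — wait, that is the wrong direction. Instead I should take $\cw$ to carry nonzero information on \emph{all} $m$ sub-units but arrange cancellation: pick, in each sub-block, a codeword of $\lcode$ whose information part is a fixed nonzero $k$-vector, plus a correction drawn from the $\ge t$-dimensional kernel, chosen to make that sub-block's contribution light. Using the $t$ free parameters per sub-block we can kill $t-1$ coordinates, but more usefully, combining across blocks, the total number of "free'' correction parameters is $\ge mt$, and I can force the global word to vanish on $(m-1)t$ coordinates beyond a baseline — leaving at most $n-k+(m-1)t+1$ — while the constraint $t \le (k-1)/(m-1)$ is precisely what guarantees there are still enough information coordinates left (i.e. $k - 1 \ge (m-1)t$) for this counting to be consistent and for the word to remain nonzero.

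So the key steps, in order, are: (1) invoke Singleton on $\lcode$ to get $\ell \ge k+t$; (2) identify, inside each sub-block projection, a $\ge t$-dimensional affine family of codewords all carrying the same fixed information sub-unit; (3) choose a fixed nonzero information pattern on the sub-units such that the $m$ chosen sub-block words can be glued into a genuine global codeword of $\code$ (this is where the interleaved structure / generator matrix must be used — the sub-block choices are not independent); (4) use the $\ge (m-1)t$ net free parameters to annihilate that many coordinates, using $t\le(k-1)/(m-1)$ to certify feasibility and nonvanishing; (5) count the surviving coordinates: at most $(n-k)$ "parity-like'' coordinates in one sub-block worth of slack plus $1$, plus $(m-1)t$ that we could \emph{not} clear, giving $\dst \le n-k+(m-1)t+1$.

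The main obstacle is step (3)–(4): the sub-block projections of a global codeword are not freely choosable — they are linked through the ambient code $\code$, so one cannot independently optimize each sub-block. The real content of the proof is to show that the linking constraints still leave $(m-1)t$ usable free parameters globally (equivalently, that the relevant "gluing'' map has a kernel of dimension $\ge (m-1)t$), and that the hypothesis $t \le (k-1)/(m-1)$ is exactly the inequality making this kernel large enough while keeping the constructed word nonzero. Since Theorem~\ref{th:dist_bound} is stated as a restatement of a known bound from~\cite{Subline}, I expect the actual argument to follow the I-I-code proof: count dimensions of the space of codewords supported on a chosen union of coordinate sets and show that a prescribed small support is unavoidable, with the inequality on $t$ ensuring the count stays in the regime where the bound is tight.
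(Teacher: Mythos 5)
There is a genuine gap, and it starts with the very first step: you apply Singleton to $\lcode$ in the wrong direction. Singleton says $\dsub \le n-\ell+1$, i.e.\ $\ell \le n-\dsub+1 = k+t$; it gives an \emph{upper} bound on the dimension of the projected code, not the lower bound $\ell \ge k+t$ that you claim. Nothing in the hypotheses forces $\ell\ge k+t$ (only $\ell\ge k$ follows, from the requirement that each sub-block determines $k$ information symbols), so the ``$t$ extra degrees of freedom per sub-block'' that your entire construction is built on need not exist. This is not a cosmetic issue: the theorem is proved by exploiting the fact that each sub-block's columns have \emph{small} rank (at most $k+t$), not by exploiting surplus dimension. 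Compounding this, you concede that steps (3)--(4) --- gluing independently optimized sub-block words into an actual codeword of $\code$ --- are exactly what you cannot do, so even granting your (false) dimension bound, the proposal does not produce the required low-weight codeword.

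The paper's argument is the standard LRC-style column-rank count, which you gesture at only in your final sentence but never execute. Let $G$ be a generator matrix of $\code$. If a set $S$ of $s$ columns of $G$ has rank at most $K-1$, there is a nonzero codeword vanishing on $S$, hence $\dst \le N-s$. Take $S$ to be the $(m-1)n$ columns of $m-1$ full sub-blocks together with $k-(m-1)t-1$ further columns from the remaining sub-block; the hypothesis $t\le (k-1)/(m-1)$ is precisely what makes this count of further columns nonnegative. Since each sub-block projection is an $[n,\ell]$ code of minimum distance $\dsub=n-k-t+1$, Singleton (in the correct direction) gives $\ell\le k+t$, so each full sub-block contributes rank at most $k+t$ and
\[
\operatorname{rank}(G|_S)\;\le\;(m-1)(k+t)+k-(m-1)t-1\;=\;mk-1\;=\;K-1 .
\]
With $s=(m-1)n+k-(m-1)t-1$ this yields $\dst\le N-s=n-k+(m-1)t+1$. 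To salvage your write-up you would need to replace the ``extra degrees of freedom'' mechanism by this rank/support argument; as written, the proof does not go through.
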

\begin{comment}
\begin{proof}
We use the well-known fact that an $s$-subset of columns of the generator matrix for $\code$ with rank at most $K-1$ implies minimum distance of at most $N-s$. The same fact was used to prove bounds on minimum distance of locally recoverable codes (LRC)~\cite{GopalanLRC,KumarVLocalErasures,TamoBargLRC}. For the subset we collect $(m-1)n$ columns corresponding to $m-1$ full sub-blocks, plus $k-(m-1)t-1\geq 0$ columns from the $m$-th sub-block. Knowing the sub-block distance, the total rank of the column subset is at most $(m-1)(k+t)+k-(m-1)t-1=K-1$; the cardinality of the subset is $s=(m-1)n+k-(m-1)t-1$, which gives the bound by taking $N-s$.\hfill
\end{proof}
\end{comment}
An optimal multi-block interleaved code is one meeting the bound~\eqref{eq:d_bound} with equality. We note two interesting special cases of Theorem~\ref{th:dist_bound}. For $m=1$ we get the usual Singleton bound; for $m=2$ we get $\dst \leq n-k+t+1$ when $t\leq k-1$. The upper bound reveals the tradeoff between the sub-block minimum distance and the minimum distance of the code. The parameter $t$ prescribes the gap of the sub-block distance from optimality, and in return increasing $t$ improves the upper bound on the distance of the full-block code.
\subsection{Short review of Reed-Solomon codes}
The key building block for our constructions is Reed-Solomon (RS) codes, which are now defined.
\begin{definition}\label{def:RS}[Reed-Solomon code]
Let $\alpha$ be an element of a finite field GF$(q)$ of order $n$. Then define a $[n,\ell]$ \textbf{Reed-Solomon code} as all polynomials $c(x)$ obtained as
\[ c(x) = i(x)g(x),\]
where $i(x)$ is an arbitrary information polynomial of degree $\leq \ell-1$, and
\begin{equation} g(x) = (x-\alpha^{-s})(x-\alpha^{-(s+1)})\cdots(x-\alpha^{-(s+n-\ell-1)}),\label{eq:RS_gen}\end{equation}
and $s$ is some integer in $[0:n-1]$.
\end{definition}
In simple words, an RS code is obtained by a generator polynomial $g(x)$ with $n-\ell$ roots whose reciprocals are consecutive powers of $\alpha$. As a useful compact notation, given $n$ and $q$, we define a RS code through its (cyclically) consecutive power set. In this representation, the code defined by the generator in~\eqref{eq:RS_gen} is
\[RS([s:s+r-1]_{n}),\]
where $r=n-\ell$ is called the {\em redundancy} of the code. It is well known that the minimum distance of $RS([s:s+r-1]_{n})$ is $r+1$. It will often be convenient to represent an RS code by the {\em complement} of its root power set, that is,
\[\RS([a:b]_{n})\triangleq RS([0:n-1]_{n}\setminus[a:b]_{n}).\]
The {\em dimension} of the code $\RS([a:a+\ell-1]_{n})$ is $\ell$, and its minimum distance is $n-\ell+1$.

\section{Construction of Multi-Block Interleaved Codes}\label{sec:m3}
Our focus in the paper is on codes with $m=3$, because this is the most useful case for Flash-based storage devices with $8=2^3$ representation levels. The constructions can be generalized to larger $m$ values.
For the specific case of $m=3$ we add the following definitions.
\begin{definition}\label{def:D_i}[$\Dst_i$-minimum-distances]
For $i\in\{1,2,3\}$, an $\langle N,K,3\rangle$ multi-block interleaved code $\code$ has \textbf{$\Dst_i$-minimum-distance} $\dst_i$ if the lowest weight of any codeword with exactly $i$ non-zero sub-blocks is $\dst_i$.
\end{definition}
Note that according to Definition~\ref{def:D_i}, the minimum distance of a $\langle N,K,3\rangle$ code is $\min_{i\in\{1,2,3\}}\dst_i$. In addition to serving as upper bounds on the code minimum distance, the $\Dst_i$-minimum-distances have important operational meanings: the $\Dst_1$-minimum-distance gives the correction capability when $1$ sub-block suffers many errors/erasures, and the $\Dst_2$-minimum-distance does the same when $2$ sub-blocks suffer many errors/erasures.
\begin{construction}\label{cnst:m3}
Let $\code_{3}$ be defined by a generator matrix $G=\left[\begin{array}{c}G_1 \\ G_2 \\ G_3 \end{array}\right]$, where $G_1$, $G_2$ and $G_3$ are the $k\times 3n$ matrices given in Fig.~\ref{fig:G_m3}. Now we define the component matrices of Fig.~\ref{fig:G_m3}; blank rectangles represent all-zero sub-matrices.
\begin{itemize}
\item $G1$ is a generator matrix for the code $\RS([0:t-1]_{n})$.
\item $G2$ is a generator matrix for the code $\RS([t:2t-1]_{n})$.
\item $GI$ is a generator matrix for the code $\RS([2t:k-1]_{n})$.
\item $GE$ is a generator matrix for the code $\RS([k:k+t-1]_{n})$.
\end{itemize}
\end{construction}
\begin{figure}[h]
\psfrag{G0}{$GI$}
\psfrag{G1}{$G1$}
\psfrag{G2}{$G2$}
\psfrag{G}{$GE$}
\psfrag{k1}{$k-2t$}
\psfrag{t}{$t$}
\psfrag{n}{$n$}
\psfrag{G1e}{$G_{1}=$}
\psfrag{G2e}{$G_{2}=$}
\psfrag{G3e}{$G_{3}=$}
    \center
    \includegraphics[scale=0.4]{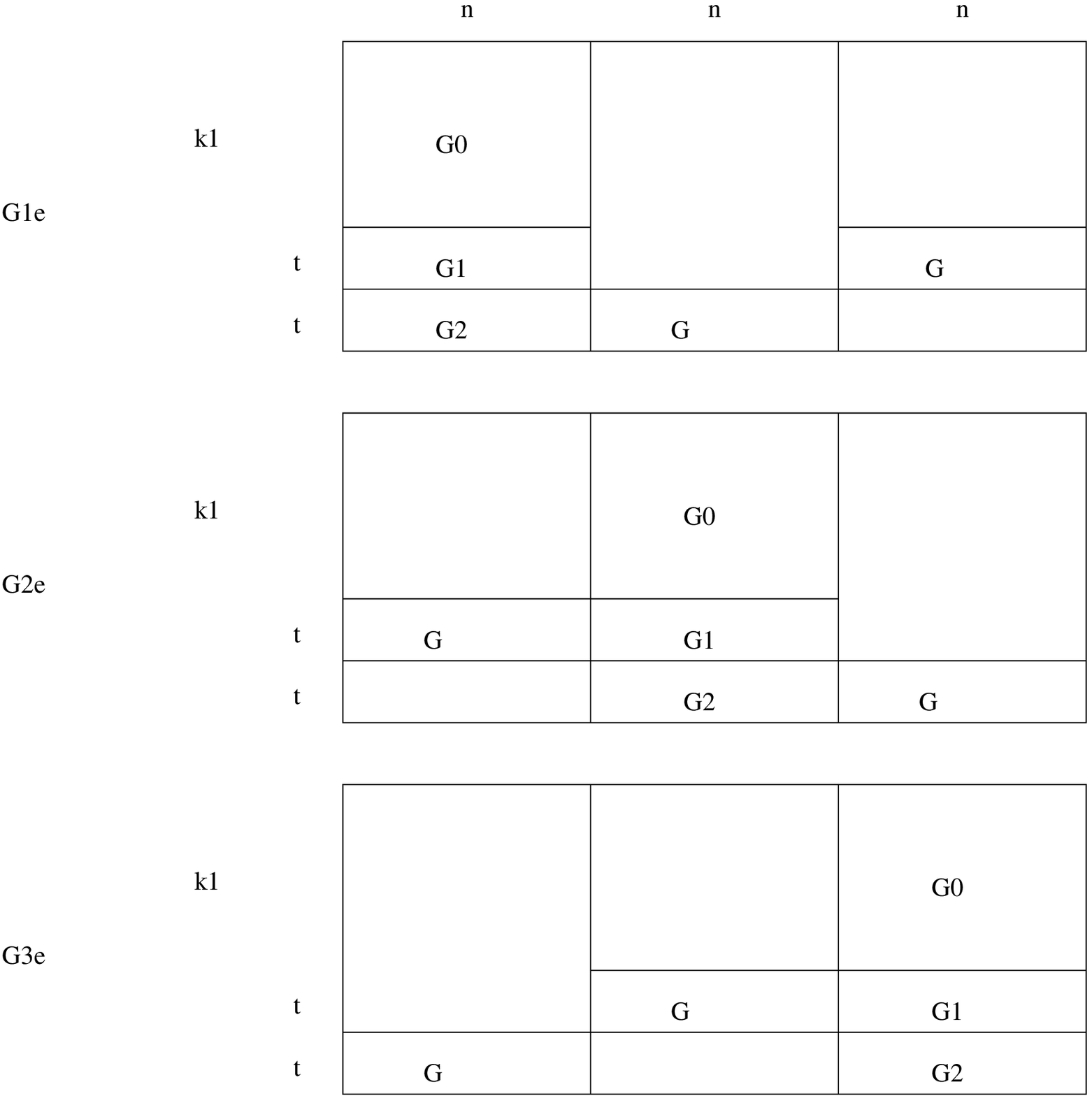}
    \caption{\small{Generator matrices for a $\langle 3n,3k,3\rangle$ multi-block interleaved code $\code_3$.}}\vspace{-1ex}
    \label{fig:G_m3}
\end{figure}

First note that the dimensions of the codes specified for $G1,G2,GI,GE$ fit their corresponding sub-matrices in Fig.~\ref{fig:G_m3}. It is straightforward to write a generator matrix for any code of the type $\RS([a:a+\ell-1]_{n})$, because the codewords of this code are evaluations of polynomials of the form $x^{a}U(x)$, where $U(x)$ is a polynomial of degree $\ell-1$ or less.
 As required, the dimension of $\code_3$ is $3k$, and its length is $3n$. We defer the discussion on encoding and decoding till after we prove the minimum distances in the following.
\begin{proposition}\label{prop:cnst_m3_subdist}
When $t<k/2$, $\code_3$ has sub-block minimum distance $\dsub=n-k-t+1$.
\end{proposition}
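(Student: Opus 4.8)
The plan is to identify the sub-block projected code $\lcode$ exactly: I claim $\lcode=\RS([0:k+t-1]_{n})$, an $[n,k+t]$ MDS code, so that $\dsub=n-(k+t)+1=n-k-t+1$. First I would translate the four constituent matrices of Fig.~\ref{fig:G_m3} into ``degree windows''. By the Reed--Solomon review, $\RS([a:a+\ell-1]_{n})$ is exactly the set of evaluations of polynomials whose monomial-degree support lies in the window $[a:a+\ell-1]$, and it has dimension $\ell$ and minimum distance $n-\ell+1$. Hence the row spaces of $G1$, $G2$, $GI$, $GE$ are the RS codes supported on the windows $[0:t-1]$, $[t:2t-1]$, $[2t:k-1]$, $[k:k+t-1]$, respectively. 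This is exactly where the hypothesis $t<k/2$ enters: it is equivalent to $2t\le k-1$, which is what guarantees that $[2t:k-1]$ is a non-empty window that does \emph{not} wrap around modulo $n$, so that $GI$ genuinely is an $[n,k-2t]$ code as stated (for $t\ge k/2$ the window wraps, $\dim\langle GI\rangle$ blows up, and the claimed distance fails); together with the well-formedness condition $k+t\le n$ under which the construction is posed, the four windows are then pairwise disjoint and tile the single window $[0:k+t-1]$. Consequently $\langle G1\rangle\oplus\langle G2\rangle\oplus\langle GI\rangle\oplus\langle GE\rangle=\RS([0:k+t-1]_{n})$, of dimension $t+t+(k-2t)+t=k+t$.

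Next I would read off from Fig.~\ref{fig:G_m3} which submatrices occupy the $n$ columns of one sub-block: stacking $G_1$, $G_2$, $G_3$ and restricting to those columns produces, up to repeated rows (harmless for a row space), exactly one copy of each of $G1$, $G2$, $GI$, $GE$. Therefore the projection of $\code_3$ onto that sub-block --- which is the code $\lcode$, and is the same for every sub-block by the cyclic symmetry of the construction --- has row space $\langle G1\rangle+\langle G2\rangle+\langle GI\rangle+\langle GE\rangle$, equal to $\RS([0:k+t-1]_{n})$ by the previous paragraph. As an $[n,k+t]$ MDS code this has minimum distance $n-k-t+1$, i.e.\ $\dsub=n-k-t+1$, as asserted.

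Everything here is bookkeeping once the figure is parsed; the only facet that needs care is the window analysis, and in particular the role of $t<k/2$ in keeping $[2t:k-1]$ from wrapping modulo $n$ and the four windows from overlapping. (An overlap would make the direct sum smaller and $\dsub$ larger; conversely, an extra submatrix slipping into a sub-block's columns in the figure would enlarge $\dim\lcode$ and make $\dsub$ smaller --- so both facets of the figure must be verified.) In the paper's convention only $\dsub\ge n-k-t+1$ is the essential claim, and that half needs merely the inclusion $\lcode\subseteq\RS([0:k+t-1]_{n})$, i.e.\ only that every constituent window lies inside $[0:k+t-1]$; the matching upper bound comes from those windows covering all of $[0:k+t-1]$.
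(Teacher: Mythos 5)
Your proposal is correct and is essentially the paper's own argument in dual language: the paper identifies the sub-block code by intersecting the root power sets of $G1,G2,GI,GE$ to get $RS([k+t:n-1]_{n})$, while you take the union of the complementary degree windows to get $\RS([0:k+t-1]_{n})$ --- the same $[n,k+t]$ code with distance $n-k-t+1$. Your added remarks (where $t<k/2$ enters, and that only the inclusion $\lcode\subseteq\RS([0:k+t-1]_{n})$ is needed for the lower bound the paper actually claims) are accurate refinements of the same proof.
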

\begin{proof}
Looking at any length-$n$ sub-block, we observe from the structure of $G$ that the codeword projected on the sub-block is a linear combination of codewords generated by $G1,G2,GI,GE$. Noting the root power sets of these codes (the complements of the sets listed in Construction~\ref{cnst:m3}), we get: $G1\rightarrow RS([t:n-1]_{n})$, $G2\rightarrow RS([2t:t-1]_{n})$, $GI\rightarrow RS([k:2t-1]_{n})$, and $GE\rightarrow RS([k+t:k-1]_{n})$. If a power index lies in the intersection of these sets, then every polynomial of a codeword in the sub-block code has this power index as root. Taking the intersection of these sets we obtain that the projected sub-block code is a
\[RS([k+t:n-1]_{n}) \]
code, and thus its minimum distance is $n-k-t+1$.\hfill
 \end{proof}
Given the sub-block minimum distance of Proposition~\ref{prop:cnst_m3_subdist}, the next lemma shows that the $\Dst_1$-minimum-distance of Construction~\ref{cnst:m3} is optimal with respect to the bound\footnote{The bound is given on the minimum distance, but from the proof of Theorem~\ref{th:dist_bound} it readily applies to the $\Dst_1$-minimum-distance as well.} of Theorem~\ref{th:dist_bound}.
\begin{lemma}\label{lem:cnst_m3_dist_1}
A code $\code_3$ from Construction~\ref{cnst:m3} has $\Dst_1$-minimum-distance $\dst_1=n-k+2t+1$, for all $t<k/2$.
\end{lemma}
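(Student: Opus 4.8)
The plan is to consider a codeword $\cw$ of $\code_3$ that is nonzero on exactly one sub-block, and show its Hamming weight is at least $n-k+2t+1$; optimality then follows from Theorem~\ref{th:dist_bound} (applied, as the footnote notes, to $\Dst_1$). By the block structure of $G$ in Fig.~\ref{fig:G_m3}, the full codeword is determined by three information vectors feeding $G_1,G_2,G_3$, and the constraint that two of the three sub-blocks vanish will force most of these information vectors to be zero, leaving the surviving sub-block as a codeword of a single RS code with large redundancy.

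The key steps, in order, are as follows. First I would read off from Fig.~\ref{fig:G_m3} which of $G1,G2,GI,GE$ contribute to each of the three sub-blocks, and write the three sub-block projections as explicit linear combinations of the constituent RS codewords. Second, I would use the fact that a codeword is zero on a sub-block only if the relevant combination of evaluation polynomials is the zero polynomial; since the constituent generators $GI$ and $GE$ (the ones that appear in all three sub-blocks, in the ``interleaved'' rows) are supported on \emph{disjoint} power ranges $[2t:k-1]_n$ and $[k:k+t-1]_n$ from the ones used elsewhere, vanishing on two sub-blocks should pin down these shared information polynomials to be zero. Third, with the shared parts killed, the nonzero sub-block is left as a codeword lying in the sum of the remaining RS codes restricted to that block; I would intersect their root power sets (exactly as in the proof of Proposition~\ref{prop:cnst_m3_subdist}, but now only over the generators that actually land in that one block) to identify it as an $RS$ code whose redundancy is $n-k+2t$, hence minimum distance $n-k+2t+1$. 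Finally, I would exhibit one codeword meeting this weight (a minimum-weight word of that $RS$ code, extended by zeros on the other two sub-blocks, which is automatically a codeword of $\code_3$) to get equality, and check the hypothesis $t<k/2$ is exactly what makes the power ranges $[0:t-1]_n,[t:2t-1]_n,[2t:k-1]_n,[k:k+t-1]_n$ well-defined and non-overlapping.

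The main obstacle I expect is bookkeeping the three sub-blocks carefully enough to be sure that ``nonzero on exactly one sub-block'' really forces the $GI$- and $GE$-generated parts to vanish rather than merely cancel among themselves in two blocks. This hinges on the precise placement of $G1,G2$ versus $GI,GE$ in the figure (which rows are ``interleaved'' across all three blocks and which are block-local), and on a dimension/root-set argument showing the constituent codes are in direct sum on each block; I would want to phrase this as: the map sending the information vectors to the pair of sub-block projections that are required to vanish is injective on the $GI,GE$-components. Once that injectivity is established the rest is the same RS root-intersection computation already used for Proposition~\ref{prop:cnst_m3_subdist}, so I do not anticipate difficulty there.
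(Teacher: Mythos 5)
Your proposal is correct and follows essentially the same route as the paper: both arguments force every information sub-vector except $\vect{v}_{1,I}$ to vanish and then read off the weight from the single surviving constituent $\RS([2t:k-1]_{n})$, whose minimum distance is $n-k+2t+1$. The only difference is that you spell out, via the disjointness of the evaluation power ranges and the resulting direct-sum/injectivity argument, the step the paper dismisses as ``clear from the structure of $G$,'' and you add a matching minimum-weight codeword for equality, which the paper omits by its stated convention of proving only the lower bound.
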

\begin{proof}
Denote the length $3k$ information vector as $\vect{v}$, and partition it as
\[ \vect{v}=(\vect{v}_{1},\vect{v}_{2},\vect{v}_{3}).\]
Each sub-vector $\vect{v}_{j}$ has length $k$, and we further partition it to
\[ \vect{v}_{j} = (\vect{v}_{j,I},\vect{v}_{j,1},\vect{v}_{j,2}).\]
The lengths of the $m=3$ constituent vectors are from left to right: $k-2t,t,t$.  Consider a codeword of $\code_3$ with one non-zero sub-block. Assume (wlog due to symmetry) that the left sub-block is non-zero. It is clear from the structure of $G$ in Fig.~\ref{fig:G_m3} that all sub-vectors except $\vect{v}_{1,I}$ must be all-zero. In this case, the codeword weight is at least the minimum weight of a non-zero codeword from $\RS([2t:k-1]_{n})$ (spanned by $GI$), which is $n-k+2t+1$ as required by the lemma statement.\hfill
 \end{proof}
The behavior of the $\Dst_2$-minimum-distance as a function of $t$ is quite different, as we now see.
\begin{lemma}\label{lem:cnst_m3_dist_2}
A code $\code_3$ from Construction~\ref{cnst:m3} has $\Dst_2$-minimum-distance $\dst_2=2(n-k-t+1)$.
\end{lemma}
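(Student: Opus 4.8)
The plan is to exhibit a codeword with exactly two non-zero sub-blocks whose weight is $2(n-k-t+1)$ (the upper bound), and then show every such codeword has weight at least $2(n-k-t+1)$ (the lower bound). By the symmetry of the three sub-blocks, assume WLOG the two non-zero sub-blocks are the first two (sub-blocks~$1$ and $2$) and sub-block~$3$ is zero.

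For the \textbf{upper bound}, I would look for an information vector $\vect{v}=(\vect{v}_1,\vect{v}_2,\vect{v}_3)$ that makes sub-block~$3$ vanish while placing a low-weight codeword in each of sub-blocks~$1$ and~$2$. The natural candidate is to use only the rows of $G_3$ (the matrix with the "two $GE$-blocks" pattern), so that the third sub-block column group gets a zero contribution while sub-blocks~$1$ and~$2$ each receive a codeword of $\RS([k:k+t-1]_n)$. Since $\RS([k:k+t-1]_n)$ has dimension $t$ and minimum distance $n-t+1$... wait, that is too large; more carefully, one wants each active sub-block to carry a codeword whose weight is exactly the \emph{sub-block} minimum distance $n-k-t+1$ from Proposition~\ref{prop:cnst_m3_subdist}. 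The key observation is that the projected sub-block code is $RS([k+t:n-1]_n)$, and a single minimum-weight codeword of this code corresponds to a polynomial of the form $x^{?}\cdot(\text{product of }k+t\text{ linear factors})$; I would pick the \emph{same} minimum-weight polynomial for sub-blocks~$1$ and~$2$ and check, from the block structure of Fig.~\ref{fig:G_m3}, that a suitable choice of $\vect{v}$ realizes this with sub-block~$3$ staying zero. The weight is then $(n-k-t+1)+(n-k-t+1)=2(n-k-t+1)$.

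For the \textbf{lower bound}, I would argue on each active sub-block separately: by Proposition~\ref{prop:cnst_m3_subdist} the projected code $\lcode$ has minimum distance $n-k-t+1$, so any non-zero projection onto a sub-block contributes weight at least $n-k-t+1$. Since the codeword has exactly two non-zero sub-blocks, both of those projections are non-zero, giving total weight at least $2(n-k-t+1)$. Combining with the upper bound gives $\dst_2 = 2(n-k-t+1)$.

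The main obstacle I expect is the \textbf{upper-bound construction}: one must verify that the third sub-block can indeed be forced to zero \emph{simultaneously} with both of the first two sub-blocks attaining the minimum sub-block weight. Because $G_1$ and $G_2$ spread their $GE$-blocks across sub-blocks in a staggered way (and $G_3$ couples sub-blocks~$1$ and~$2$), the condition "sub-block $3$ is zero" imposes linear constraints that interact with the components $\vect{v}_{1,1},\vect{v}_{1,2},\vect{v}_{2,1},\vect{v}_{2,2}$ used to tune the first two sub-blocks; one has to confirm these constraints leave enough freedom to realize the two independent minimum-weight $RS([k+t:n-1]_n)$ codewords. Working out exactly which rows of $G$ to activate — likely only the $GI$-row of the appropriate block together with the $GE$-rows arranged so their sub-block-3 contributions cancel — is the one genuinely construction-specific step; once the right combination is identified, the weight count is immediate from the RS minimum-distance fact.
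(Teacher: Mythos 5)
Your lower-bound argument --- each of the two non-zero sub-block projections is a non-zero codeword of the projected code $\lcode$ and hence has weight at least $\dsub=n-k-t+1$ by Proposition~\ref{prop:cnst_m3_subdist}, giving total weight at least $2(n-k-t+1)$ --- is exactly the paper's proof, which it declares ``trivial from the fact that $\dsub=n-k-t+1$.'' The tightness direction that you rightly flag as the delicate step is not actually proved in the paper either: by the paper's stated convention such equalities are to be read as ``at least,'' and the paper only remarks informally that weight-$2(n-k-t+1)$ codewords are not excluded because $\vect{v}_{1,1}$ and $\vect{v}_{2,2}$ are both multiplied by the same $GE$ in the right sub-block and can cancel there --- precisely the cancellation pattern your upper-bound sketch is groping toward (note, though, that your specific suggestion of using ``the same minimum-weight polynomial'' in both active sub-blocks would still have to contend with the coupling constraints you list, so it should not be presented as settled).
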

The proof of Lemma~\ref{lem:cnst_m3_dist_2} is trivial from the fact that $\dsub=n-k-t+1$. Unfortunately, the construction does not exclude codewords with two minimum-weight sub-block codewords. The problem lies in the fact that both $\vect{v}_{1,1}$ and $\vect{v}_{2,2}$ are multiplied by the same matrix $GE$ in the right column, which may cancel their contribution to the right sub-block. We now summarize the distance properties   in the following theorem (proof immediate).
\begin{theorem}\label{th:m3_dists}
A code $\code_3$ from Construction~\ref{cnst:m3} has sub-block minimum distance $\dsub=n-k-t+1$, $\Dst_1$-minimum-distance $\dst_1=n-k+2t+1$, and $\Dst_2$-minimum-distance $\dst_2=2(n-k-t+1)$. For $t\leq (n-k+1)/4$, $\code_3$ has minimum distance $\dst=\dst_1$ (optimal), and for $(n-k+1)/4<t$ it has minimum distance $\dst=\dst_2$ (not optimal).
\end{theorem}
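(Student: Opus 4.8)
The plan is to assemble the theorem from the three facts already established and then optimize the resulting piecewise-defined quantity $\dst=\min\{\dsub\cdot 1,\dst_1,\dst_2\}$ over the codeword structure; but since a codeword with $1$ non-zero sub-block already has weight $\geq\dst_1$ and a codeword with $2$ non-zero sub-blocks has weight $\geq\dst_2$ (and these are the only ways to have a non-zero codeword with at most $2$ non-zero sub-blocks — the $3$-non-zero case contributes $\dst_3$ which must also be accounted for, though by the earlier lemmas the claim is that the minimum is governed by $\dst_1,\dst_2$), I would first observe that Proposition~\ref{prop:cnst_m3_subdist} and Lemmas~\ref{lem:cnst_m3_dist_1} and~\ref{lem:cnst_m3_dist_2} directly give the three displayed distance values, so the only real content is the comparison between $\dst_1$ and $\dst_2$ (and a sanity check that $\dst_3$ is never the binding constraint).

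Next I would carry out the comparison. We have $\dst_1=n-k+2t+1$ and $\dst_2=2(n-k-t+1)=2(n-k+1)-2t$. Setting $\dst_1\leq\dst_2$ gives $n-k+2t+1\leq 2(n-k+1)-2t$, i.e. $4t\leq n-k+1$, i.e. $t\leq(n-k+1)/4$. Hence for $t\leq(n-k+1)/4$ we have $\dst_1\leq\dst_2$, so $\dst=\min\{\dst_1,\dst_2,\dst_3\}=\dst_1$ provided $\dst_3\geq\dst_1$ in this range; and for $t>(n-k+1)/4$ we have $\dst_2<\dst_1$, so $\dst=\dst_2$ provided $\dst_3\geq\dst_2$ there. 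I would dispatch the $\dst_3$ bookkeeping by noting that a codeword with all three sub-blocks non-zero has weight at least $3\dsub=3(n-k-t+1)$, and checking $3(n-k-t+1)\geq\dst_1$ and $3(n-k-t+1)\geq\dst_2$: the first is $3(n-k+1)-3t\geq(n-k+1)+2t$, i.e. $2(n-k+1)\geq 5t$, which holds on $t\le(n-k+1)/4$ since $(n-k+1)/4 < 2(n-k+1)/5$; the second is $3(n-k+1)-3t\geq 2(n-k+1)-2t$, i.e. $(n-k+1)\geq t$, which holds throughout the regime $t<k/2\le n-k+1$ (using $k\le n$, so $k/2\le n/2\le n-k+1$ whenever $k\le n$... more carefully, the hypothesis $t<k/2$ together with $k\le n$ gives $t<n/2\le n-k+1$ exactly when $k\le (n+2)/2$, which one can either assume or absorb into the standing assumptions). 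For optimality in the first regime, I would invoke Theorem~\ref{th:dist_bound}: with $\dsub=n-k-t+1$ the bound reads $\dst\le n-k+2t+1$ (taking $m=3$), and the hypothesis $t\leq(k-1)/2$ of that theorem is implied by $t<k/2$, so $\dst_1$ meets the bound with equality, i.e. the code is optimal.

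The main obstacle — really the only delicate point — is making the $\dst_3$ lower bound airtight and checking the corner cases where the two threshold inequalities $t<k/2$ and the partition of the $t$-axis at $(n-k+1)/4$ interact. The bound $\dst_3\ge 3\dsub$ is the naive "three non-zero sub-blocks each contribute at least $\dsub$" argument and is certainly valid, but I should double-check that the standing assumptions ($t<k/2$, and implicitly $k<n$ so that $\dsub\ge 1$) guarantee the comparisons above without further hypotheses; if a stray edge case survives, the clean fix is to state the theorem under the mild extra assumption $t<k/2$ already present, which suffices. I expect the whole argument to be a short paragraph of inequality-chasing once the three constituent results are cited, exactly as the paper's parenthetical "(proof immediate)" suggests.

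\medskip

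\noindent\emph{Proof sketch.} By Proposition~\ref{prop:cnst_m3_subdist}, Lemma~\ref{lem:cnst_m3_dist_1}, and Lemma~\ref{lem:cnst_m3_dist_2}, the code $\code_3$ has $\dsub=n-k-t+1$, $\dst_1=n-k+2t+1$, and $\dst_2=2(n-k-t+1)$. A codeword with all three sub-blocks non-zero has weight at least $3\dsub=3(n-k-t+1)$, which exceeds both $\dst_1$ and $\dst_2$ in the stated range (using $t<k/2\le n-k+1$), so $\dst=\min\{\dst_1,\dst_2\}$. The inequality $\dst_1\le\dst_2$ is equivalent to $4t\le n-k+1$; thus for $t\le(n-k+1)/4$ we get $\dst=\dst_1=n-k+2t+1$, and since $t<k/2$ implies $t\le(k-1)/2$, Theorem~\ref{th:dist_bound} gives the matching upper bound $\dst\le n-k+2t+1$, so this is optimal. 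For $t>(n-k+1)/4$ we get $\dst=\dst_2=2(n-k-t+1)$, which is strictly below the bound $n-k+2t+1$, hence not optimal. \qed
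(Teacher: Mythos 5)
Your proposal is correct and follows the same route as the paper, which declares the proof immediate from Proposition~\ref{prop:cnst_m3_subdist} and Lemmas~\ref{lem:cnst_m3_dist_1} and~\ref{lem:cnst_m3_dist_2} together with the comparison $\dst_1\le\dst_2\iff 4t\le n-k+1$; your explicit check that the three-non-zero-sub-block case never binds (via $\dst_3\ge 3\dsub$) is a useful piece of bookkeeping the paper leaves implicit. The only loose end is your justification of $t\le n-k+1$ through $t<k/2\le n-k+1$, which would need $k\le(2n+2)/3$: the cleaner argument is that Construction~\ref{cnst:m3} itself requires $k+t\le n$ (so that the power set $[k:k+t-1]$ of $GE$ stays disjoint from $[0:k-1]$ and the sub-block code has positive distance), giving $t\le n-k$ unconditionally and closing the case without any extra hypothesis.
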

Similar to the case here, all the known I-I constructions in the literature with optimal $\Dst_1$-minimum-distance (given $t$) suffer from the same problem of having codewords with two minimum-weight sub-block codewords. In~\cite{Subline} the authors address this exact problem, but give no explicit constructions. The implication of Theorem~\ref{th:m3_dists}, in particular the terms $2t$ in $\dst_1$ and $-2t$ in $\dst_2$, is that there is a ratio of $1$ between the loss in $\dst_2$ to the gain in $\dst_1$ as we increase $t$. This high ratio compromises the code's ability to address high error/erasure events in both one and two sub-blocks. In our next construction we solve this by proving a much lower ratio of $2/3$. But before that, we want to show the good property of Construction~\ref{cnst:m3} having an extremely efficient reverse mapping from the sub-block codeword to the $k$ information symbols of the input sub-unit $\vect{v}_j$. We now write the encoding and reverse-mapping functions explicitly.\\
\textbf{Encoding:} The encoder input is an information vector $\vect{v}=(\vect{v}_1,\vect{v}_2,\vect{v}_3)$, where each $\vect{v}_j$ is a vector of $k$ GF$(q)$ symbols holding the information sub-unit for sub-block $j$. The output of the encoder is simply the vector $\vect{v}\cdot G$, which is a codeword of $\code_3$. The matrices $G1,G2,GI,GE$ are chosen as the standard polynomial-evaluation matrices of the respective RS codes (mapping $\ell$ polynomial coefficients to $n$ code symbols). For a code $\RS([a:a+\ell-1]_{n}$, the corresponding generator matrix maps an information sub-vector $\vect{u}$ to a codeword by evaluating the polynomial $\tilde{U}(x)=x^aU(x)$ on $n$ elements of GF$(q)$, where $U(x)$ is the polynomial whose coefficients are the elements of $\vect{u}$, and it has degree at most $\ell-1$. Note that this is not the way the RS code was defined in Definition~\ref{def:RS}, but it is true by an equivalent definition of RS codes.\\
\textbf{Reverse mapping:} The reverse-mapping operation returns the information sub-unit $\vect{v}_j$ from the codeword of sub-block $j$. Denote by $u(x)$ the polynomial whose coefficients are the codeword symbols corresponding to $\vect{u}$. From the property given in Definition~\ref{def:RS}, $u(\alpha^{-i})$ is zero for all $i$ except those in $[a:a+\ell-1]_{n}$. Since these sets are not overlapping between $G1,G2,GI,GE$, we can take the combined sub-block codeword polynomial $c(x)$ (which is a sum of different polynomials like $u(x)$ above), and get that for each $i\in[0:k-1]$, that $c(\alpha^{-i})=u(\alpha^{-i})$ for one of the sub-vectors $\vect{u}$ in $\vect{v}_{j}$. Now we can use the well known property of RS codes (inverse DFT) and obtain
\[ U_l = n^{-1} u(\alpha^{-(a+l)}) = n^{-1} c(\alpha^{-(a+l)}),~ l\in[0:\ell-1].  \]
Since $c(x)$ is given, we can find the coefficients of all the polynomials $U(x)$ and thus all the vectors $\vect{u}$.\\
Neither the I-I codes from~\cite{IImag} nor the ones in~\cite{Subline} have a known way to map $k$ information symbols to each sub-block codeword, such that it is possible to retrieve these $k$ symbols back from the sub-block codeword. The encoder specified for I-I codes~\cite{MarioII} (corresponding to the parameters of  Construction~\ref{cnst:m3}) maps $k+t$ information symbols to each of two sub-blocks, and $k-2t$ to the third.
\section{An Improved Construction with Larger Minimum Distance}\label{sec:im3}
To get codes $\langle N,K,m\rangle$ with larger minimum distances beyond what Construction~\ref{cnst:m3} can achieve, we present the following improved construction. In the sequel we define $\thalf=t/2$, assuming even $t$.
\begin{construction}\label{cnst:im3}
Let $\icode_{3}$ be defined by a generator matrix $G=\left[\begin{array}{c}G_1 \\ G_2 \\ G_3 \end{array}\right]$, where $G_1$, $G_2$ and $G_3$ are $k\times 3n$ matrices specified in Fig.~\ref{fig:G_im3} as follows. The figure specifies $4\thalf$ rows of each $G_i$, which are populated by the generator matrices specified below, or combinations thereof. The figure omits the top $k-4\thalf$ rows of each $G_i$ that host the $GI$ matrix given below, in the corresponding sub-block \footnote{same as in Fig.~\ref{fig:G_m3}.}.
\begin{itemize}
\item $G1$ is a generator matrix for the code $\RS([0:\thalf-1]_{n})$.
\item $G2$ is a generator matrix for the code $\RS([\thalf:2\thalf-1]_{n})$.
\item $G3$ is a generator matrix for the code $\RS([2\thalf:3\thalf-1]_{n})$.
\item $G4$ is a generator matrix for the code $\RS([3\thalf:4\thalf-1]_{n})$.
\item $GI$ is a generator matrix for the code $\RS([4\thalf:k-1]_{n})$.
\item $GE$ is a generator matrix for the code $\RS([k:k+\thalf-1]_{n})$.
\item $GF$ is a generator matrix for the code $\RS([k+\thalf:k+2\thalf-1]_{n})$.
\end{itemize}
\end{construction}
\begin{figure}[h]
\psfrag{G1}{$G1$}
\psfrag{G2}{$G2$}
\psfrag{G3}{$G3$}
\psfrag{G43}{$G4+G3$}
\psfrag{E}{$GE$}
\psfrag{F}{$GF$}

\psfrag{s}{$4\thalf$}
\psfrag{v11}{$\vect{v}_{1,1}:$}
\psfrag{v12}{$\vect{v}_{1,2}:$}
\psfrag{v13}{$\vect{v}_{1,3}:$}
\psfrag{v14}{$\vect{v}_{1,4}:$}
\psfrag{v21}{$\vect{v}_{2,1}:$}
\psfrag{v22}{$\vect{v}_{2,2}:$}
\psfrag{v23}{$\vect{v}_{2,3}:$}
\psfrag{v24}{$\vect{v}_{2,4}:$}
\psfrag{v31}{$\vect{v}_{3,1}:$}
\psfrag{v32}{$\vect{v}_{3,2}:$}
\psfrag{v33}{$\vect{v}_{3,3}:$}
\psfrag{v34}{$\vect{v}_{3,4}:$}
\psfrag{n}{$n$}
\psfrag{G1e}{$G_{1}=$}
\psfrag{G2e}{$G_{2}=$}
\psfrag{G3e}{$G_{3}=$}
    \center
    \includegraphics[scale=0.4]{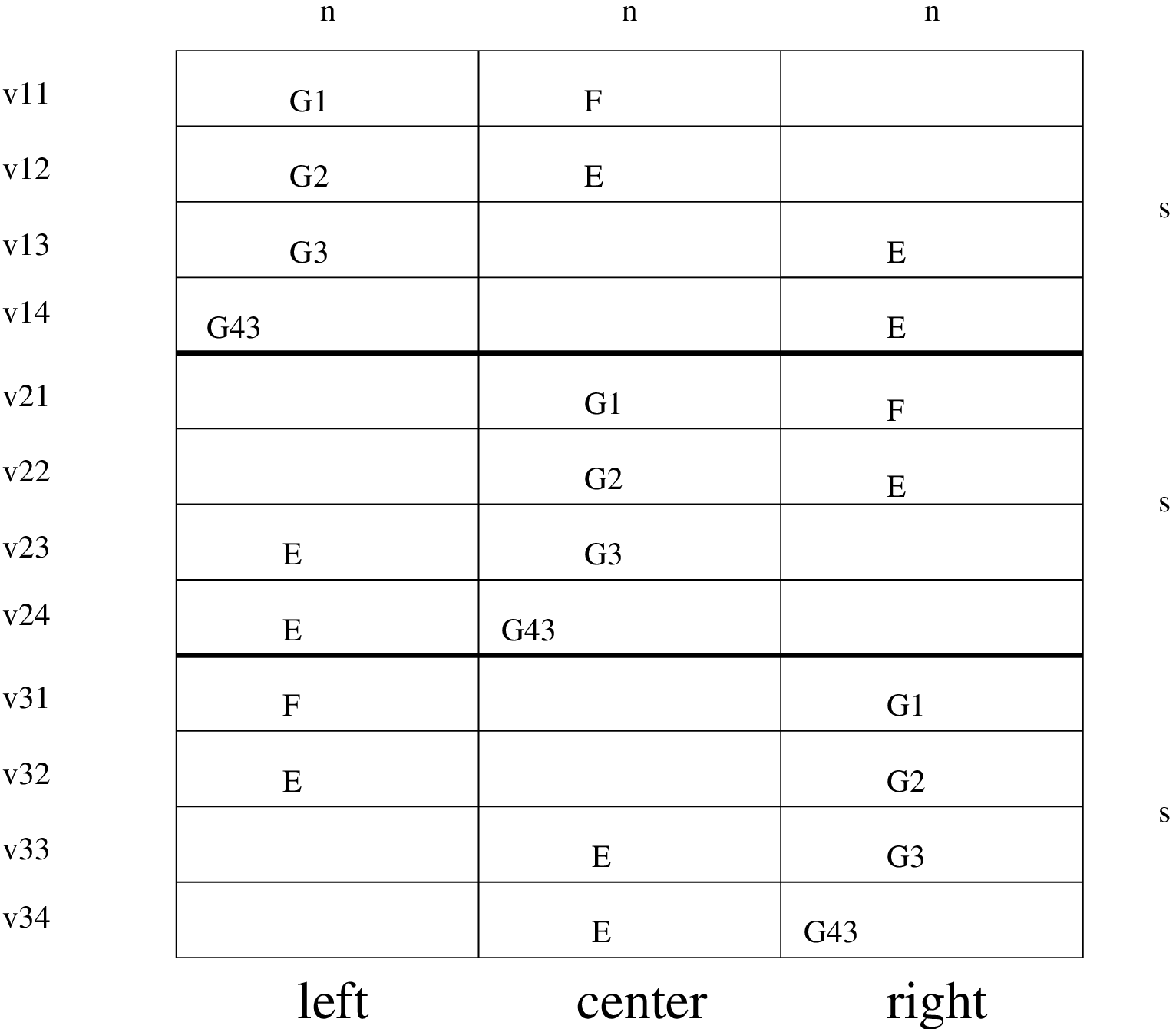}
    \caption{\small{Generator matrices for an improved $\langle 3n,3k,3\rangle$ multi-block interleaved code $\icode_3$.}}\vspace{-1ex}
    \label{fig:G_im3}
\end{figure}

As in the previous construction the dimensions of the codes specified for $G1,G2,G3,G4,GI,GE,GF$ fit their corresponding sub-matrices in Fig.~\ref{fig:G_im3}. The code dimension of $\icode_3$ is $3k$, and its length is $3n$. We prove the minimum distances in the following.
\begin{proposition}\label{prop:icnst_m3_subdist}
When $t<k/2$, $\icode_3$ has sub-block minimum distance $\dsub=n-k-t+1$.
\end{proposition}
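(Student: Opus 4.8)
The plan is to reuse the root--power--set argument from the proof of Proposition~\ref{prop:cnst_m3_subdist} almost verbatim. First I would fix a single length-$n$ sub-block; by the symmetry of Fig.~\ref{fig:G_im3} the three are equivalent, so it suffices to treat one. Restricting $G$ to that sub-block's columns, any projected codeword is a linear combination of codewords coming from the blocks $GI,G1,G2,G3,G4,GE,GF$ together with the combined block $G3+G4$. For every block equal to a single $\RS$-generator, the complement of its root power set is exactly the interval written in Construction~\ref{cnst:im3} (the intervals $[0:\thalf-1],[\thalf:2\thalf-1],[2\thalf:3\thalf-1],[3\thalf:4\thalf-1]$ for $G1,\dots,G4$, the interval $[4\thalf:k-1]$ for $GI$, and $[k:k+\thalf-1],[k+\thalf:k+2\thalf-1]$ for $GE,GF$); for the combined block a codeword is a sum $a+b$ with $a$ a codeword of $\RS([2\thalf:3\thalf-1]_{n})$ and $b$ a codeword of $\RS([3\thalf:4\thalf-1]_{n})$, so it vanishes on the intersection of those two root power sets and thus contributes only the interval $[2\thalf:3\thalf-1]\cup[3\thalf:4\thalf-1]$, which is nothing new.

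Next I would intersect all these root power sets, which --- just as in Proposition~\ref{prop:cnst_m3_subdist} --- amounts to complementing the union of the intervals above. Since $t=2\thalf$, hence $4\thalf=2t$, that union telescopes to $[0:k+t-1]_{n}$, so every projected sub-block codeword has every power index of $[k+t:n-1]_{n}$ as a root; therefore the sub-block projected code lies inside $RS([k+t:n-1]_{n})$, whose minimum distance is $n-k-t+1$, and the sub-block minimum distance is at least that. For the matching equality I would count dimensions: the blocks sitting natively in the sub-block ($GI$ together with $G1,\dots,G4$) already span, through the $k$ free information symbols of $\vect{v}_j$, the full code on the power indices $[0:k-1]$, while the cross copies of $GE$ and $GF$ --- fed by information symbols disjoint from those of $\vect{v}_j$ and landing on the disjoint index set $[k:k+t-1]$ --- add $t$ further independent coordinates. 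Hence the sub-block projected code has dimension exactly $k+t$ and coincides with the MDS code $RS([k+t:n-1]_{n})$ of minimum distance $n-k-t+1$; this is, incidentally, the very same sub-block code as in Construction~\ref{cnst:m3}.

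The step I expect to take the most care is checking that the mixing introduced in Fig.~\ref{fig:G_im3} --- specifically the combined block $G3+G4$ and the exact placement of the cross blocks --- neither widens the set of covered power indices beyond $[0:k+t-1]_{n}$ nor drops the dimension below $k+t$. The first is immediate, since $G3+G4$ only touches indices already covered by $G3$ and $G4$; the second follows because the native portion of each sub-block, whatever its precise shape, still spans the full dimension-$k$ code on $[0:k-1]$ and the cross portion still contributes $t$ independent coordinates on the disjoint set $[k:k+t-1]$. Finally, the hypothesis $t<k/2$ is used exactly as in Proposition~\ref{prop:cnst_m3_subdist}: it guarantees $4\thalf\leq k-1$, so that $GI=\RS([4\thalf:k-1]_{n})$ --- and with it all the other constituent codes --- is non-degenerate.
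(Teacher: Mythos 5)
Your proposal is correct and follows essentially the same route as the paper, which simply invokes the root-power-set intersection argument of Proposition~\ref{prop:cnst_m3_subdist} to conclude that each projected sub-block codeword lies in an RS code with $n-k-2\thalf=n-k-t$ consecutive-power roots. Your extra care with the combined block $G3+G4$ and the dimension count for exact equality are sound additions, though the paper explicitly contents itself with the lower bound $\dsub\geq n-k-t+1$.
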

\begin{proof}
The proof is identical to Proposition~\ref{prop:cnst_m3_subdist}: at each sub-block there is a codeword of an RS code with $n-k-2\thalf=n-k-t$ roots with consecutive powers.\hfill
 \end{proof}
\begin{lemma}\label{lem:icnst_m3_dist_1}
The code $\icode_3$ has $\Dst_1$-minimum-distance $\dst_1=n-k+3t/2+1$.
\end{lemma}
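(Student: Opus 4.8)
The plan is to reduce, via the cyclic structure of Construction~\ref{cnst:im3}, to a codeword whose only non-zero sub-block is the first one, and then to pin down the exact form such a codeword must take. Write the information vector as $\vect v=(\vect v_1,\vect v_2,\vect v_3)$ with $\vect v_j=(\vect v_{j,I},\vect v_{j,1},\vect v_{j,2},\vect v_{j,3},\vect v_{j,4})$, where $\vect v_{j,I}$ has length $k-2t$ and each overflow piece $\vect v_{j,i}$ has length $\thalf$. As in the proof of Proposition~\ref{prop:icnst_m3_subdist}, I represent the projection of a codeword onto each sub-block as a polynomial of degree $<n$. The projection onto sub-block $j$ then splits as a ``local'' part, contributed by $\vect v_{j,I},\vect v_{j,1},\dots,\vect v_{j,4}$ through $GI,G1,\dots,G4$, plus ``external'' parts contributed by overflow pieces of the \emph{other} two information blocks through $GE$ and $GF$.

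The key structural fact I would exploit is the separation of frequencies deliberately built into the root-power sets: the local part of any sub-block is an \emph{arbitrary} polynomial of degree $\le k-1$, because the information power sets $[0:\thalf-1]_n,[\thalf:2\thalf-1]_n,[2\thalf:3\thalf-1]_n,[3\thalf:4\thalf-1]_n,[4\thalf:k-1]_n$ of $G1,\dots,G4,GI$ partition $[0:k-1]_n$; whereas every external contribution is a polynomial all of whose monomials have degree in $[k:k+t-1]$ (coming from $GE$ and $GF$), and $k+t-1<n$ since $\dsub\ge1$, so there is no wrap-around. Hence, if sub-blocks $2$ and $3$ both vanish, the low-degree and high-degree parts of each must vanish separately. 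Vanishing of the local parts, together with the fact that $GI,G1,\dots,G4$ have full combined row rank (they span $\RS([0:k-1]_n)$, of dimension $k=(k-2t)+4\thalf$), forces $\vect v_2=\vect v_3=\zero$. Vanishing of the external parts then says exactly that every overflow piece of $\vect v_1$ routed outward into sub-block $2$ or $3$ must itself be $\zero$; reading Figure~\ref{fig:G_im3}, exactly three of $\vect v_{1,1},\vect v_{1,2},\vect v_{1,3},\vect v_{1,4}$ are routed outward (say $\vect v_{1,1},\vect v_{1,2},\vect v_{1,3}$), while the fourth is kept only in its own sub-block through $G4$.

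Substituting back, with $\vect v_2=\vect v_3=\zero$ there are no external contributions into sub-block $1$, and with three overflow pieces of $\vect v_1$ killed the sub-block-$1$ codeword is $\vect v_{1,I}\,GI+\vect v_{1,4}\,G4$, i.e. a codeword of the RS code with information power set $[3\thalf:k-1]_n$, namely $\RS([3\thalf:k-1]_n)$. This code has dimension $k-3\thalf$ and minimum distance $n-k+3\thalf+1=n-k+3t/2+1$, giving $\dst_1\ge n-k+3t/2+1$; one must check along the way that the $G4+G3$ entanglement appearing in Figure~\ref{fig:G_im3} sits in ``own'' positions only, so that it does not inject a low-degree term across sub-blocks and does not lower the distance of the surviving code. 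For the reverse inequality I would take any minimum-weight codeword of $\RS([3\thalf:k-1]_n)$, realise it as $\vect v_{1,I}\,GI+\vect v_{1,4}\,G4$, set every other sub-vector to $\zero$, and note (again from the routing) that the resulting $\vect v G$ is a codeword of $\icode_3$ with only the first sub-block non-zero and weight exactly $n-k+3t/2+1$.

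The step I expect to be the main obstacle is making the frequency-separation argument airtight: one must be certain that the local contributions in \emph{every} sub-block are confined to degrees $\le k-1$ while external ones live strictly above degree $k-1$, which rests on the precise placement of $GE,GF$ relative to $GI,G1,\dots,G4$ and, crucially, on verifying from Figure~\ref{fig:G_im3} that the $G4+G3$ blocks never appear in a cross position. The secondary point requiring care is the bookkeeping of which overflow pieces are routed outward, since it is precisely the fact that only three (not four) of them are that produces the $3\thalf$ — rather than $4\thalf=2t$ — improvement term in the distance, and hence distinguishes $\icode_3$ from $\code_3$.
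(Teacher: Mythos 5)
Your overall strategy --- separate the ``local'' degrees $[0:k-1]$ from the ``external'' degrees $[k:k+t-1]$ in each sub-block polynomial, conclude $\vect{v}_2=\vect{v}_3=\zero$, and then identify the surviving left-sub-block code as $\RS([3\thalf:k-1]_{n})$ --- is essentially the paper's, and the frequency-separation phrasing is a clean way to justify what the paper simply reads off Fig.~\ref{fig:G_im3}. However, there is a genuine gap exactly at the point you flag. Your routing bookkeeping is wrong: it is not the case that three of the overflow pieces $\vect{v}_{1,1},\dots,\vect{v}_{1,4}$ are sent to separate external slots and individually forced to zero while the fourth stays purely local. Each sub-block has only $2\thalf=t$ external dimensions (one $GE$ slot and one $GF$ slot), which must be shared among the overflow of \emph{both} other information blocks; consequently $\vect{v}_{1,3}$ and $\vect{v}_{1,4}$ are routed to the \emph{same} $GE$ slot of the right sub-block, and the vanishing of that sub-block forces only $\vect{v}_{1,3}+\vect{v}_{1,4}=\zero$, not $\vect{v}_{1,3}=\zero$. (Only $\vect{v}_{1,1}$ and $\vect{v}_{1,2}$ are individually forced to zero, via the $GF$ and $GE$ slots of the center sub-block.)

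This matters because the check you defer --- that the $G4+G3$ entanglement ``does not lower the distance of the surviving code'' --- fails under your reading: the $G4+G3$ block sits precisely in the own position of the $\vect{v}_{1,4}$ row. If $\vect{v}_{1,3}$ were forced to zero individually, the surviving left-sub-block contribution would be $\vect{v}_{1,I}GI+\vect{v}_{1,4}(G4+G3)$, whose codewords generally have non-zero components at powers $[2\thalf:3\thalf-1]$, so the consecutive-roots bound would only give $n-k+t+1$. The construction's actual mechanism is the opposite of what you assume: the shared $GE$ slot forces the sum condition $\vect{v}_{1,3}=-\vect{v}_{1,4}$, and then in the left sub-block $\vect{v}_{1,3}G3+\vect{v}_{1,4}(G4+G3)=\vect{v}_{1,4}G4$, so the $G3$ parts cancel and the surviving code is indeed $\RS([3\thalf:k-1]_{n})$ of distance $n-k+3t/2+1$. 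The same issue affects your achievability direction: with the actual figure, setting $\vect{v}_{1,4}\neq\zero$ and $\vect{v}_{1,3}=\zero$ produces a non-zero right sub-block, so a single-sub-block codeword of weight $n-k+3t/2+1$ must be realized with $\vect{v}_{1,3}=-\vect{v}_{1,4}$ (the paper only proves the lower bound in any case).
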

\begin{proof}
Recall the partition $\vect{v}=(\vect{v}_{1},\vect{v}_{2},\vect{v}_{3})$.
Now we partition each $\vect{v}_{j}$ to
\[ \vect{v}_{j} = (\vect{v}_{j,I},\vect{v}_{j,1},\vect{v}_{j,2},\vect{v}_{j,3},\vect{v}_{j,4}).\]
The lengths of the constituent sub-vectors are from left to right: $k-4\thalf,\thalf,\thalf,\thalf,\thalf$. Consider a codeword of $\icode_3$ with one non-zero sub-block, for which we seek a lower bound on the Hamming weight. Assume (wlog due to symmetry) that the one non-zero sub-block is the left one. Then it is clear from Fig.~\ref{fig:G_im3} that the only sub-vectors that can be non-zero are $\vect{v}_{1,I}$, $\vect{v}_{1,3}$, and $\vect{v}_{1,4}$. Setting any other sub-vector to non-zero implies a non-zero codeword in at least one other sub-block. Further, if $\vect{v}_{1,3}$ or $\vect{v}_{1,4}$ are non-zero, then they must satisfy $\vect{v}_{1,3}+\vect{v}_{1,4}=0$. Otherwise the codeword in the right sub-block will be non-zero as well. It can be seen that this last condition implies that only $G4$ contributes to the non-zero codeword, and together with the contribution of $GI$ from  $\vect{v}_{1,I}$, we get $n-k+3\thalf$ consecutive roots and $\dst_1=n-k+3t/2+1$.\hfill
\end{proof}

\begin{lemma}\label{lem:icnst_m3_dist_2}
The code $\icode_3$ has $\Dst_2$-minimum-distance $\dst_2=2(n-k-t/2+1)$.
\end{lemma}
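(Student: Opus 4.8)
The statement has two directions; the one that governs the code's minimum distance is $\dst_2\ge 2(n-k-\thalf+1)$, and I would prove it in the style of Lemma~\ref{lem:icnst_m3_dist_1}. Let $\cw=\vect{v}G$ be a codeword of $\icode_3$ with exactly two non-zero sub-blocks, and let $a_1,a_2,a_3$ be its three sub-block projections. Using the symmetry of Construction~\ref{cnst:im3} under a cyclic shift of the sub-blocks (as already invoked in Lemma~\ref{lem:icnst_m3_dist_1}), I may assume $a_3=\zero$ and that $a_1,a_2$ are the non-zero ones. The plan is to prove that $a_3=\zero$ already forces both $a_1$ and $a_2$ into the subcode $C_1\subseteq\lcode$ that consists of the codewords of the sub-block code $\lcode=\RS([0:k+2\thalf-1]_n)$ carrying no $GF$-component, namely $C_1=\RS([0:k+\thalf-1]_n)$. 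Since $C_1$ is an RS code of dimension $k+\thalf$, it is MDS of minimum distance $n-k-\thalf+1$; hence the non-zero words $a_1$ and $a_2$ each have weight at least $n-k-\thalf+1$, and as the three sub-blocks lie on disjoint coordinates, $\wH{\cw}=\wH{a_1}+\wH{a_2}\ge 2(n-k-\thalf+1)$, which is the claimed bound.

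The heart of the argument is the implication $a_3=\zero\Rightarrow a_1,a_2\in C_1$. To establish it I would reuse the partition $\vect{v}=(\vect{v}_1,\vect{v}_2,\vect{v}_3)$, $\vect{v}_j=(\vect{v}_{j,I},\vect{v}_{j,1},\ldots,\vect{v}_{j,4})$, of Lemma~\ref{lem:icnst_m3_dist_1}. Because $GI,G1,\ldots,G4,GE,GF$ occupy pairwise disjoint power sets, the projection of $\cw$ onto sub-block $\ell$ decomposes into: an ``own'' part on powers $[0:k-1]$, which by that disjointness is an invertible linear image of $\vect{v}_\ell$ alone (cf.\ the proof of Proposition~\ref{prop:icnst_m3_subdist}); a $GE$-part on powers $[k:k+\thalf-1]$; and a $GF$-part on powers $[k+\thalf:k+2\thalf-1]$, the latter two being the prescribed linear combinations of the cross-terms that the other two sub-units feed into sub-block $\ell$ according to Fig.~\ref{fig:G_im3}. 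Imposing $a_3=\zero$ and invoking power-disjointness once more yields $\vect{v}_3=\zero$ (from the own part of $a_3$) together with the vanishing of the particular combinations of $\vect{v}_1,\vect{v}_2$ that make up the $GE$- and $GF$-parts of $a_3$; the layout of $GE$ and $GF$ in Fig.~\ref{fig:G_im3} is arranged so that these are exactly the combinations constituting the $GF$-parts of $a_1$ and of $a_2$. Hence those $GF$-parts vanish, so $a_1,a_2$ carry no power in $[k+\thalf:k+2\thalf-1]$ and therefore lie in $C_1$. This is the generator-matrix form of the integrated-interleaving fact that a codeword with one zero sub-block has its remaining sub-blocks in the next, larger, code of the nested chain, and it is exactly where Construction~\ref{cnst:im3} improves on Construction~\ref{cnst:m3}, whose layout does not propagate $a_3=\zero$ to the $GF$-parts and so leaves only $\dst_2=2\dsub$.

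For the matching inequality I would exhibit a codeword attaining the bound: fix a minimum-weight word $c\in C_1$ (one exists because $C_1$ is MDS), set $\vect{v}_3=\zero$, and solve the resulting (consistent) linear system read off from Fig.~\ref{fig:G_im3} for $\vect{v}_1,\vect{v}_2$ so that $a_3=\zero$ while $a_1$ and $a_2$ are minimum-weight words of $C_1$; this codeword has weight $2(n-k-\thalf+1)$. The step I expect to be the real obstacle is the bookkeeping inside the crux --- reading off from Fig.~\ref{fig:G_im3} which cross-term relations $a_3=\zero$ forces and checking that they annihilate the $GF$-parts of \emph{both} surviving sub-blocks simultaneously; a merely one-sided propagation would give only $\dst_2\ge\dsub+(n-k-\thalf+1)=2(n-k-\thalf+1)-\thalf$, short of the target. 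With the crux in place, the position of $\dst_2$ relative to the bound of Theorem~\ref{th:dist_bound} and to Construction~\ref{cnst:m3} --- in particular the $2/3$ ratio between the $-t$ slope of $\dst_2$ and the $+3t/2$ slope of $\dst_1$ --- follows at once.
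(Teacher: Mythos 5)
Your overall framework is the right one and matches the paper's: assume the right sub-block is zero, use the disjointness of the constituent power sets to deduce which sub-vectors of $\vect{v}$ must vanish, conclude that each of the two surviving sub-block projections lies in a code with $n-k-\thalf$ consecutive-power roots, and add the two weights over disjoint coordinates. But the crux you single out --- that $a_3=\zero$ annihilates the $GF$-parts of \emph{both} $a_1$ and $a_2$, so that both land in $C_1=\RS([0:k+\thalf-1]_{n})$ --- is false, and you were right to be suspicious of it. In Construction~\ref{cnst:im3} the $GF$-part of the center sub-block is fed by $\vect{v}_{1,1}$ (this is visible in the erasure-decoding paragraph: ``we get $\vect{v}_{1,1}$ as the reverse map of $GF$ in the center sub-block''), and $\vect{v}_{1,1}$ is not constrained by $a_3=\zero$ at all: it touches only the left sub-block (through $G1$) and the center sub-block (through $GF$). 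Taking $\vect{v}_{1,1}\neq 0$ and every other sub-vector zero already produces a codeword with exactly two non-zero sub-blocks whose center projection has a non-zero $GF$-component, so $a_2\notin C_1$ in general.

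What actually happens, and what the paper's proof uses, is asymmetric. Setting $a_3=\zero$ forces $\vect{v}_3=\zero$ (own part of the right sub-block), $\vect{v}_{2,1}=0$ (the $GF$-feed into the right sub-block), and $\vect{v}_{1,3}+\vect{v}_{1,4}+\vect{v}_{2,2}=0$ (the $GE$-feed). The first consequence kills the $GF$-part of the \emph{left} sub-block (which is $\vect{v}_{3,1}GF$), so $a_1$ has roots at powers $[k+\thalf:n-1]$. The second kills the $G1$-part of the \emph{center} sub-block's own contribution, so $a_2$ has roots at powers $[k+2\thalf:\thalf-1]_{n}$ --- a cyclically consecutive set of the same size $n-k-\thalf$, but sitting at the opposite end of the power spectrum. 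Each non-zero sub-block therefore still has weight at least $n-k-\thalf+1$, which recovers exactly the extra $\thalf$ you feared losing; but it is recovered through the vanishing of a $G1$-row, not of a second $GF$-row. As written, your argument stalls precisely at the step you flagged, and the one-sided bound $2(n-k-\thalf+1)-\thalf$ is all it yields.
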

\begin{proof}
We use the definition of $\vect{v}$ from the proof of Lemma~\ref{lem:icnst_m3_dist_1}.
Consider a codeword of $\icode_3$ with two non-zero sub-blocks, for which we seek a lower bound on the total Hamming weight. Assume (wlog due to symmetry) that the two non-zero sub-blocks are the left and center ones. Then it is clear from Fig.~\ref{fig:G_im3} that the only sub-vectors that can be non-zero are $\vect{v}_{1,I}$, $\vect{v}_{1,1}$, $\vect{v}_{1,2}$, $\vect{v}_{1,3}$, $\vect{v}_{1,4}$ from $\vect{v}_{1}$, and $\vect{v}_{2,I}$, $\vect{v}_{2,2}$, $\vect{v}_{2,3}$, $\vect{v}_{2,4}$ from $\vect{v}_{2}$. Other non-zero sub-vectors would imply a non-zero codeword on the right sub-block. The above restriction on the sub-vectors that can be non-zero implies that there is no contribution of $F$ in the left sub-block codeword and no contribution of $G1$ in the center sub-block codeword. So each of the sub-block codewords has $n-k-\thalf$ consecutive-power roots, and thus $\dst_2=2(n-k-\thalf+1)$.\hfill
\end{proof}
Combining Proposition~\ref{prop:icnst_m3_subdist} and Lemmas~\ref{lem:icnst_m3_dist_1},\ref{lem:icnst_m3_dist_2}, we summarize the distance properties of Construction~\ref{cnst:im3} in the following theorem (proof omitted).
\begin{theorem}\label{th:im3_dists}
A code $\icode_3$ from Construction~\ref{cnst:im3} has sub-block minimum distance $\dsub=n-k-t+1$, $\Dst_1$-minimum-distance $\dst_1=n-k+3t/2+1$, and $\Dst_2$-minimum-distance $\dst_2=2(n-k-t/2+1)$. For $t\leq 2(n-k+1)/5$, $\icode_3$ has minimum distance $\dst=\dst_1$, and for $2(n-k+1)/5<t\leq (n-k+1)/2$ it has minimum distance $\dst=\dst_2$.
\end{theorem}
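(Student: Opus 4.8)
The plan is to obtain Theorem~\ref{th:im3_dists} by assembling the three distance statements already proved for Construction~\ref{cnst:im3} together with a trivial lower bound on the $\Dst_3$-minimum-distance, and then carrying out an elementary case analysis. First I would invoke Proposition~\ref{prop:icnst_m3_subdist} for $\dsub=n-k-t+1$, Lemma~\ref{lem:icnst_m3_dist_1} for $\dst_1=n-k+3t/2+1$, and Lemma~\ref{lem:icnst_m3_dist_2} for $\dst_2=2(n-k-t/2+1)$. Next I would note that a codeword of $\icode_3$ with all three sub-blocks non-zero projects onto a non-zero codeword of $\lcode$ on each of the three sub-blocks; since these coordinate sets are disjoint and $\lcode$ has minimum distance $\dsub$, such a codeword has weight at least $3\dsub=3(n-k-t+1)$, so $\dst_3\ge 3(n-k-t+1)$. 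By the remark following Definition~\ref{def:D_i}, the minimum distance of $\icode_3$ is $\dst=\min_{i\in\{1,2,3\}}\dst_i$, and the above bounds give $\dst\ge\min\{\,n-k+3t/2+1,\ 2(n-k-t/2+1),\ 3(n-k-t+1)\,\}$.

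It then remains to identify which of the three quantities is smallest as a function of $t$. Setting $\Delta=n-k+1$, they are $\Delta+3t/2$, $2\Delta-t$ and $3\Delta-3t$, and the pairwise comparisons are $\Delta+3t/2\le 2\Delta-t \iff t\le 2\Delta/5$, then $\Delta+3t/2\le 3\Delta-3t \iff t\le 4\Delta/9$, and $2\Delta-t\le 3\Delta-3t \iff t\le \Delta/2$. Since $2/5<4/9$, on the range $t\le 2\Delta/5$ the first quantity $\dst_1$ is the smallest of the three, hence $\dst\ge\dst_1$; combined with the trivial $\dst\le\dst_1$ this yields $\dst=\dst_1$. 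On the range $2\Delta/5<t\le\Delta/2$ the second quantity satisfies both $2\Delta-t<\Delta+3t/2$ and $2\Delta-t\le 3\Delta-3t$, so it is the smallest, giving $\dst\ge\dst_2$ and hence $\dst=\dst_2$. This is exactly the dichotomy asserted in the theorem.

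I do not expect a genuine obstacle here: the statement is a bookkeeping combination of Proposition~\ref{prop:icnst_m3_subdist} and Lemmas~\ref{lem:icnst_m3_dist_1}--\ref{lem:icnst_m3_dist_2} with three linear inequalities. The one point that warrants care is checking that the $\Dst_3$ term never becomes the binding constraint inside the stated ranges — this is precisely why one verifies that $t\le 2\Delta/5$ forces $t\le 4\Delta/9$ in the first regime, and why the upper limit $t\le\Delta/2$ in the second regime is exactly the condition guaranteeing $\dst_2\le 3\dsub$. I would also carry along the standing hypotheses under which the cited results hold (in particular $t$ even with $2t\le k$, so that Construction~\ref{cnst:im3} is well defined and Proposition~\ref{prop:icnst_m3_subdist} applies), and recall the paper's convention that the reverse inequalities $\dst\le\dst_1$ and $\dst\le\dst_2$, which make the displayed values exact, are witnessed by the single-sub-block and two-sub-block codewords already exhibited in the proofs of Lemmas~\ref{lem:icnst_m3_dist_1} and~\ref{lem:icnst_m3_dist_2}.
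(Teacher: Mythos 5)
Your proposal is correct and is exactly the argument the paper has in mind: the paper states Theorem~\ref{th:im3_dists} with ``proof omitted,'' declaring it an immediate combination of Proposition~\ref{prop:icnst_m3_subdist} and Lemmas~\ref{lem:icnst_m3_dist_1}--\ref{lem:icnst_m3_dist_2}, and your write-up supplies precisely that bookkeeping, with the pairwise comparisons of $\dst_1$, $\dst_2$, and the $\Dst_3$ bound $3\dsub$ all checked correctly. The only thing you add beyond the paper's implicit reasoning is the explicit verification that the $\Dst_3$ term is never binding on the stated ranges (via $2/5<4/9$ and $t\le(n-k+1)/2$), which is a worthwhile detail the paper glosses over rather than a departure from its approach.
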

Theorem~\ref{th:im3_dists} demonstrates the value of Construction~\ref{cnst:im3}: it allows to use large values of $t$ that give minimum distances higher than Construction~\ref{cnst:m3} can reach. It can be calculated that when $t>2(n-k+1)/7$, Construction~\ref{cnst:im3} has superior minimum distance compared to Construction~\ref{cnst:m3}. From $t=2(n-k+1)/7$ we can increase $t$ in Construction~\ref{cnst:im3} up to $t=2(n-k+1)/5$, and get minimum distance as high as $d=1.6(n-k+1)$, while Construction~\ref{cnst:m3} has a cutoff at $d=1.5(n-k+1)$.\\
\textbf{Erasure decoding:} We chose to prove the code correction properties by lower bounding the minimum distances. However, an alternative way is by giving an explicit erasure-decoding algorithm, which we do now. First, suppose that one sub-block has $n-k+3t/2$ erasures and the other two have each $n-k-t$ erasures or less. From symmetry assume that the left sub-block has the largest number. Then decoding the center sub-block with a $[n,k+t]$ code gives $\vect{v}_{2,1},\ldots,\vect{v}_{2,4}$, and similarly the right block gives $\vect{v}_{3,1},\ldots,\vect{v}_{3,4}$. In addition, we get $\vect{v}_{1,1}$ as the reverse map of $GF$ in the center sub-block, and then $\vect{v}_{1,2}$ by canceling $\vect{v}_{3,3}$ and $\vect{v}_{3,4}$ from the reverse map of $GE$ in the center sub-block. Finally, we can find $\vect{v}_{1,3}+\vect{v}_{1,4}$ by canceling $\vect{v}_{2,2}$ from the reverse map of $GE$ in the right sub-block. Now we can decode the left sub-block with a $[n,k-3t/2]$ code having contributions from only $GI$ and $G4$. Second, suppose that two sub-blocks have each $n-k-t/2$ erasures and the third one has $n-k-t$ erasures or less. From symmetry assume that the right sub-block has the smaller number. Then decoding the right sub-block with a $[n,k+t]$ code gives $\vect{v}_{2,1}$, and $\vect{v}_{3,1}$ (among all $\vect{v}_{3,l}$). Now after canceling these two we can decode each of the left and center sub-blocks with a $[n,k+t/2]$ code and recover their erased symbols.

\section{Conclusion}
The constructions in this paper are the first known ones to offer read access in both sub-blocks and full blocks. The first construction has optimal $\Dst_1$-minimum-distance given $t$, but a limiting reduction in the $\Dst_2$-minimum-distance as $t$ grows. The second construction has slower growth of the $\Dst_1$-minimum-distance, but in return also slower reduction in $\Dst_2$, and altogether a higher maximal minimum distance. Future work is needed to extend these constructions, and find others with similar tradeoffs, to additional $m$ values. While the codes suffice to use finite fields with sizes as small as the sub-block size, further reducing the field size is an important future direction, e.g. through using BCH codes as the constituents.

\section{Acknowledgement}
This work was supported in part by the German-Israel Foundation and by the Israel Science Foundation.

\bibliography{myrefs1}
\end{document}